 \theoremstyle{plain}
 \theoremstyle{plain}
 \theoremstyle{plain}
 \theoremstyle{plain}
 \theoremstyle{plain}
 \theoremstyle{plain}
 \theoremstyle{remark}
 \newtheorem*{rem*}{Remark}
 \theoremstyle{plain}
\theoremstyle{plain}
\newcommand{\ket}[1]{\vert{#1}\rangle} 
 \newcommand{\ketbra}[2]{\left\vert{#1}\right\rangle\!\left\langle{#2}\right\vert}
\newcommand{\Tr}{\mathrm{Tr}}
\theoremstyle{plain}
\newtheorem{theorem}{Theorem}
\theoremstyle{plain}
\newtheorem{corollary}{Corollary} 
\theoremstyle{plain}
\theoremstyle{plain}
\theoremstyle{plain}
\theoremstyle{plain}
\begin{document}

\title{{A simple proof that a}nomalous weak values require coherence}

\newcommand{\inl}{INL -- International Iberian Nanotechnology Laboratory, Av. Mestre Jos\'{e} Veiga s/n, 4715-330 Braga, Portugal}
\newcommand{\inlshort}{INL -- International Iberian Nanotechnology Laboratory, Braga, Portugal}
\newcommand{\uff}{Instituto de F\'{i}sica, Universidade Federal Fluminense, Av. Gal. Milton Tavares de Souza s/n, Niter\'{o}i -- RJ, 24210-340, Brazil}
\newcommand{\uffshort}{Instituto de F\'{i}sica, Universidade Federal Fluminense, Niter\'{o}i -- RJ, Brazil}
\newcommand{\cfum}{Centro de F\'{i}sica, Universidade do Minho, Campus de Gualtar, 4710-057 Braga, Portugal}
\newcommand{\cfumshort}{Centro de F\'{i}sica, Universidade do Minho, Braga, Portugal}

\author{Rafael Wagner}
\email{rafael.wagner@inl.int}
\affiliation{\inlshort}
\affiliation{\cfumshort}

\author{Ernesto F. Galv\~ao}
\email{ernesto.galvao@inl.int}
\affiliation{\inlshort}
\affiliation{\uffshort}

\date{\today}

\begin{abstract}
The quantum mechanical weak value $A_w=\left\langle {\phi}|A|\psi \right \rangle / \left\langle \phi | \psi \right\rangle$ of an observable $A$ is a measurable quantity associated with an observable $A$ and pre- and post-selected states $\vert\psi \rangle, \vert \phi \rangle$. Much has been discussed about the meaning and metrological uses of anomalous weak values, lying outside of the range of eigenvalues of $A$. 
We { present a simple proof} that anomalous weak values require that the (possibly mixed) pre- and post- selection states have coherence in the eigenbasis of $A$. We also present conditions under which anomalous $A_w$ are witnesses of generalized contextuality, dispensing with the operational weak measurement set-up.

\end{abstract}

\maketitle

\section{Introduction}\label{sec: intro}
Superposition states are a defining hallmark of quantum mechanics. For general mixed states this resource is known as quantum coherence, and is defined with respect to a specific choice of basis $\{\vert a \rangle \}_{a}$ associated with a (non-degenerate) observable $A$. In this context, \textit{coherent states} $\rho$ are defined as those which have non-null off-diagonal density matrix elements $\langle a\vert \rho \vert a'\rangle \neq 0$ for $a \neq a' $.  Coherence can be formally treated as a resource~\cite{gour2019quantumresource}, and shown to be responsible for various nonclassical phenomena, providing advantage in information processing  tasks~\cite{streltsov2017colloquium}. 

While standard projective measurements typically strongly disturb a quantum system, in 1988 Aharonov, Albert and Vaidman~\cite{aharonov1988result} proposed a new measurement scheme allowing for a tunable degree of disturbance on the measured systems. A weak measurement scheme involves preparing a quantum state $\vert \psi \rangle$, followed by a weak interaction between the system and a measurement apparatus, generated by some observable $A$, with a final postselection onto some other state $\vert \phi \rangle$. The average change in the apparatus pointer, for a sufficiently weak interaction between the measurement device and the state, will be given by the so-called weak value
\begin{equation}
A_w = \frac{\langle \phi \vert A \vert \psi \rangle }{\langle \phi \vert \psi \rangle}.    
\end{equation}

The weak value $A_w$ differs from common averages of the observable $A$ in that it can lie outside the range of the spectrum $\sigma(A)$ of $A$. When this happens, $A_w$ is called an \textit{anomalous} weak value, and this property has been shown to provide some advantage in  metrology~\cite{dressel2014colloquium,tamir2013introduction}. It has been argued that classical interference models can reproduce this effect~\cite{ferrie2014how}. Later it was shown that those effects would only be possible for models capable of precisely reproducing the same kind of interference phenomenology that makes non-classical effects possible for physical systems \cite{dressel2015weak}. Under specific operational constraints, statistics arising from anomalous weak values in weak measurements was shown to be explained only by contextual models~\cite{pusey2014anomalous,kunjwal2019anomalous}.

We advance the analysis of the role of coherence in weak values~\cite{dressel2015weak,mundarain2016quantumness,pan2020interference,wagner2023quantum} by studying the quasi-probability distribution mentioned in Ref.~\cite{dressel2015weak}, revisiting it from the perspective of unitary-invariant properties of a set of quantum states known as Bargmann invariants \cite{chien2016characterization, oszmaniec2021measuring}. 
Our Theorem~\ref{theorem: main result} shows that weak value anomaly requires a rather specific type of coherence to be present, namely, coherence as a relational property between the pre/post selection states and the eigenbasis of the observable $A$. We provide several examples showing that coherence alone is not sufficient for anomaly to appear. In Corolary ~\ref{corollary: main resul} we also show that negativity or imaginarity of the quasi-probabilities guarantees anomalous weak values for certain observables.

The fact that weak value anomaly implies coherence opens up the possibility of witnessing coherence using weak value measurements, without the need for state tomography, knowledge of dimension, purity or commutativity. This could be done using recently proposed quantum circuits that  measure weak values \cite{wagner2023quantum, oszmaniec2021measuring}. We also remark on the relevance of recently established connections between unitary invariants and contextuality~\cite{wagner2022inequalities}, together with techniques for testing contextuality without relying on operational constraints~\cite{schmid2018contextual,selby2021accessible,selby2022open}. These results enable us here to present simple ways to robustly quantify contextuality using measurements of weak values, allowing  novel simplified tests of contextuality.

\section{A quasi-probability distribution associated with weak values.}\label{sec: quasiprobability}

Consider an arbitrary observable $A$, with eigenbasis $\{ \ket{a_i}\}_{i=1}^d$ and corresponding eigenvalues $\{a_i\}_{i=1}^d$. The weak value~\cite{aharonov1988result,aharonov1990properties} $A_w$ of $A$ is defined as:
\begin{equation}
    A_w = \frac{\langle \phi \vert A\vert \psi \rangle}{\langle \phi \vert \psi \rangle}= \sum_{i} a_i \frac{\langle \phi \vert a_i \rangle \langle a_i \vert \psi \rangle}{\langle \phi \vert \psi \rangle },
    \label{eq:awdef}
\end{equation}
where we assume $\langle \phi \vert \psi \rangle \neq 0$. In Refs.~\cite{wagner2023quantum,yunger_Halpern2018quasiprobability,dressel2015weak}, it was observed that multiplying by $\langle \psi | 
\phi \rangle / \langle \psi | \phi \rangle=1$ we can rewrite this as:
\begin{eqnarray}
    A_w = 
    \sum_i a_i \frac{\langle \phi \vert a_i \rangle \langle a_i \vert \psi \rangle \langle \psi \vert \phi \rangle }{\vert \langle \phi \vert \psi \rangle\vert^2 }
    = \sum_{i} a_i \frac{\Delta_3(\rho_{\phi},a_i,\rho_{\psi})}{\Delta_2(\rho_{\phi},\rho_{\psi})},
    \label{eq:awbarg}
\end{eqnarray}
where $\rho_{\psi} = \vert \psi \rangle \langle \psi \vert, \rho_{\phi} = \vert \phi \rangle \langle \phi \vert $ are, respectively, the pre- and post-selected states, and $\Delta_n$ is the $n$-th order Bargmann invariant \cite{bargmann1964note, chien2016characterization} of a $n$-tuple of states:
\begin{equation}\label{eq: bargmann invariant definition}
    \Delta_{\rho_1\rho_2\dots\rho_n}\equiv \Delta_n(\rho_1,\dots,\rho_n) = \text{Tr}(\rho_1 \dots \rho_n).
\end{equation}

Bargmann invariants are capable of witnessing the presence of a recently introduced notion of nonclassicality, termed \textit{set coherence}~\cite{designolle2021set}. It corresponds to the property that states in a given set cannot all be diagonal with respect to any single basis, as investigated in Ref.~\cite{galvao2020quantum}. Ref.~\cite{wagner2023quantum} noted that negativity and imaginarity of weak values are witnesses of set coherence. In particular, for the case of weak values of a given observable $A$, this will imply coherence with respect to the eigenbasis of $A$. Fig.~\ref{fig:awgraph} shows the graph characterizing relational information for all quantities defining $A_w$. Our treatment of coherence applies to general pre- and post-selection states, including mixed states, for which weak values are defined in terms of Bargmann invariants as
\begin{eqnarray}
    A_w &=& \frac{\text{Tr}(\rho_\phi A \rho_\psi)}{\text{Tr}(\rho_\phi\rho_\psi)} =  \sum_{i}a_i \frac{\text{Tr}(\rho_\phi \vert a_i \rangle \langle a_i \vert \rho_\psi)}{\text{Tr}(\rho_\phi \rho_\psi)}= \\ &=&\sum_{i}a_i \frac{\Delta_3(\rho_{\phi},a_i,\rho_{\psi})}{\Delta_2(\rho_{\phi},\rho_{\psi})} \label{eq:mixedb}
\end{eqnarray}

Circuits based on the cycle test \cite{oszmaniec2021measuring} were proposed in Ref.~\cite{wagner2023quantum} to directly estimate $A_w$ in this more general form, which has appeared elsewhere~\cite{dressel2015weak,dziewior2019universality}. It is easy to check that in Eqs. (\ref{eq:awbarg}) and (\ref{eq:mixedb}), the weight terms $\frac{\Delta_3(\rho_{\phi},a_i,\rho_{\psi})}{\Delta_2(\rho_{\phi},\rho_{\psi})}$ define quasi-probabilities, in the sense that these terms sum to 1: 
\begin{equation}\label{eq: quasi wv}
    g(\rho_\phi,\rho_\psi\vert a_i) := \frac{\Delta_3(\rho_{\phi},a_i,\rho_{\psi})}{\Delta_2(\rho_{\phi},\rho_{\psi})}, \sum_i g(\rho_\phi,\rho_\psi\vert a_i) = 1.
\end{equation}
However, these quasi-probabilities can be \textit{anomalous}, that is, outside of the real interval $[0,1]$.  The quasi-probabilities $g(\rho_\phi,\rho_\psi\vert a_i)$ characterize relational, unitary-invariant properties of the set of states that includes $A$'s eigenbasis and the two states $\rho_{\psi}, \rho_{\phi}$. In related recent work, negativity and imaginarity of the Kirkwood-Dirac (KD) quasi-probability distribution has been linked to anomalous weak values \cite{yunger_Halpern2018quasiprobability,lostaglio2022kirkwood}. As pointed out in \cite{wagner2023quantum}, the KD distribution is written in terms of relational properties of a single state and two different bases. By focusing on the minimal scenario involving just $A$'s eigenbasis and the pre- and post-selected states, we will obtain a sharper characterization of the connection between anomalous weak values and coherence. 

\begin{figure}               \includegraphics[width=0.95\columnwidth]{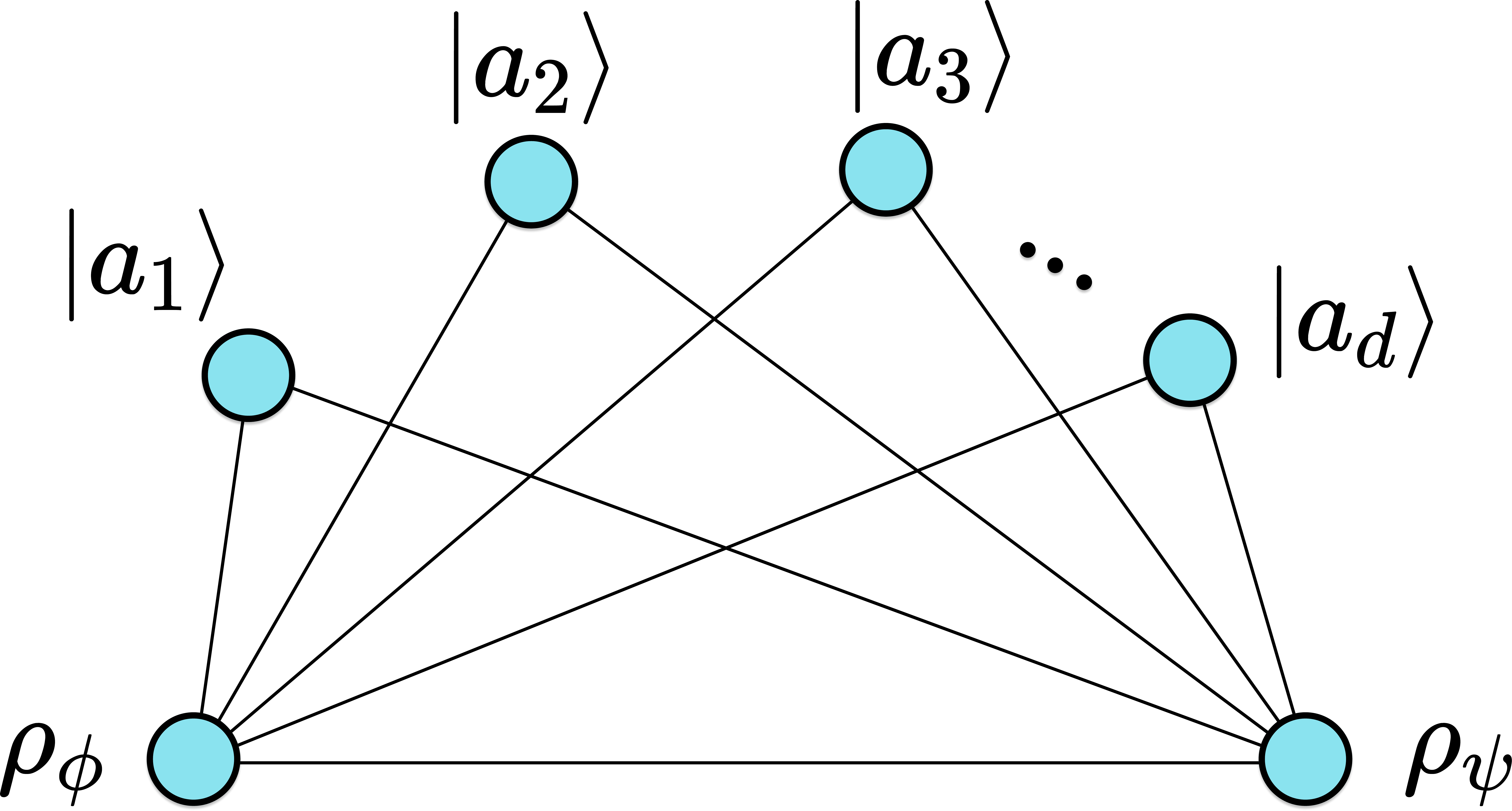}
\centering
    \caption{\textbf{Frame graph characterizing relational information of pre- and post-selection states, and a basis for observable $A$.} Vertices represent quantum states in a $d$-dimensional Hilbert space. 
    For general mixed states, edges represent two-state overlaps $\text{Tr}(\rho_\psi \rho_\phi)$. Two of the vertices represent the pre-selected state $\rho_{\psi}$ and the post-selected state $\rho_{\phi}$, with all other vertices representing the vector eigenbasis of $A$, the observable of interest.}
    \label{fig:awgraph}
\end{figure}

Constructions related to our proposed distribution $g$ have appeared before in the literature. In the continuous phase-space setting, a complex-valued quasi-probability distribution was introduced in the context of the cross-Wigner distribution~\cite{degosson2012weak}. 
The distribution $g$ can be viewed as a discrete version of the continuous distribution $\rho_{\phi, \psi}(z)$ described in Ref.~\cite{degosson2012weak}, sharing its key properties, but with respect to the discrete phase space provided by the eigenbasis of $A$. Denoting $g_i \equiv g(\rho_\phi,\rho_\psi | a_i)$, it is easy to show  that: (i) $\sum_i \text{Re}[g_i] = 1, \sum_i \text{Im}[g_i]=0 $, (ii) $g(\rho_\phi,\rho_\psi\vert a_i)^* = g(\rho_\psi,\rho_\phi \vert a_i)$, (iii) $g(\alpha \rho_\phi, \alpha \rho_\psi \vert a_i) = g(\rho_\phi,\rho_\psi \vert a_i), \forall \alpha \in \mathbb{C}$, and (iv) $A_w = \sum_i a_i g_i$. All these properties are also satisfied by $\rho_{\phi,\psi}(z)$ in Ref.~\cite{degosson2012weak} where $z = (x,p) \in \mathbb{R}^{2N}$ constitutes the continuous phase-space of $N$ degrees of freedom, with technical differences associated with the continuous phase-space framework.

The idea of studying anomalous weak values from the perspective of anomalous quasi-probabilities has also appeared before in a less general description. From a rather broad view, negative joint quasi-probabilities are always capable of reproducing experimental data in quantum theory~\cite{feynman1987negative,bartlett1945negative}, but there are many such distributions capable of reproducing the strongest possible quantum correlations~\cite{abramsky2014operational,morris2022witnessing,onggadinata2023reexamination,alSafi2013simulating}, a fact that somewhat disfavours those as good explanations due to fine-tuning arguments. Ref.~\cite{lund2010measuring} introduced the notion of ``weak value quasi-probability'', that relates to our description of $g$. Their distribution corresponds to the real part of the Kirkwood-Dirac quasi-probability distribution~\cite{kirkwood1933quantum,dirac1945analogy}, also known as Terletsky-Margeneau-Hill quasi-probabilities~\cite{levy2020quasiprobability,margenau1961correlation,terletsky1937limiting,dressel2015weak}. In our terms, it is the real part of the numerator defining $g$, with anomalous values then simply real values outside the range $[0,1]$.
Ref.~\cite{higgins2015using} later argued that negative values in this quasi-probability distribution can be used to single out the many possible joint distributions explaining quantum data, as this would be the only such distribution capable of explaining the results of weak measurements.

\section{Anomalous weak values require coherence}

An anomalous weak value $A_w$ is one that is outside of the range of eigenvalues of $A$. Clearly, the existence of an anomalous $A_w$ requires at least one anomalous quasi-probability~\cite{dressel2015weak}. Our first result is: 
\begin{theorem}\label{theorem: main result}
    The appearance of an anomalous weak value $A_w$ of observable $A$ requires coherence of both $\rho_\phi,\rho_\psi$ in the eigenbasis of $A$. 
\end{theorem}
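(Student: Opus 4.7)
The plan is to prove the contrapositive: if at least one of $\rho_\psi, \rho_\phi$ is incoherent (i.e.\ diagonal) in the eigenbasis of $A$, then $A_w$ is not anomalous. The two cases are symmetric, so I would treat $\rho_\psi$ first and deal with $\rho_\phi$ at the end by a short symmetry remark.

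Assume $\rho_\psi = \sum_i p_i \ketbra{a_i}{a_i}$ with $p_i \geq 0$ and $\sum_i p_i = 1$. Plugging this together with the spectral decomposition $A = \sum_i a_i \ketbra{a_i}{a_i}$ into the mixed-state formula~\eqref{eq:mixedb} and using cyclicity of the trace, both numerator and denominator collapse onto the diagonal entries of $\rho_\phi$ in the $A$-basis:
\begin{equation}
A_w = \frac{\sum_i a_i \, p_i \bra{a_i}\rho_\phi\ket{a_i}}{\sum_i p_i \bra{a_i}\rho_\phi\ket{a_i}}.
\end{equation}

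The key observation is that positive semi-definiteness of $\rho_\phi$ forces $\bra{a_i}\rho_\phi\ket{a_i}\geq 0$, so the weights $q_i := p_i \bra{a_i}\rho_\phi\ket{a_i}$ are non-negative reals. Normalising, $A_w = \sum_i (q_i / \sum_j q_j)\, a_i$ is a genuine convex combination of the real eigenvalues $\{a_i\}$, and therefore lies in the convex hull $[\min_i a_i, \max_i a_i]$ of the spectrum; in particular $A_w$ is real and non-anomalous. The denominator is strictly positive because the weak value is only defined when $\Tr(\rho_\phi \rho_\psi) = \sum_j q_j \neq 0$, so the conclusion is unambiguous.

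For the case where $\rho_\phi$ is the incoherent state, I would either repeat the computation with the roles of the two states swapped (the structure is identical), or invoke property (ii) from the list of properties of $g$, $g(\rho_\phi,\rho_\psi|a_i)^* = g(\rho_\psi,\rho_\phi|a_i)$, which implies that swapping pre- and post-selection complex-conjugates the weak value; applying the already-proved case then yields a real convex combination again. I do not foresee a serious obstacle: the whole argument is a one-line spectral calculation, and the only mild subtlety is keeping the denominator nonzero, which is the usual well-definedness assumption for $A_w$ already in force throughout the paper.
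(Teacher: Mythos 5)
Your proof is correct, and it reaches the same mathematical core as the paper's — non-negativity of the diagonal entries $\bra{a_i}\rho_\phi\ket{a_i}$ forced by positive semi-definiteness — but by a more streamlined route. The paper splits the argument into two cases (both states diagonal; exactly one diagonal), works at the level of the quasi-probabilities $g_i$ and Bargmann invariants, and for the second case invokes an external theorem that $\Tr(XY)\ge 0$ for positive semi-definite $X,Y$ to conclude that all third-order invariants $\Delta_{\rho_\phi a_i \rho_\psi}$ are non-negative; it then needs a final normalization argument to rule out $g_i>1$ as well. Your single spectral computation handles both cases at once: diagonality of one state collapses $A_w$ to $\sum_i q_i a_i/\sum_j q_j$ with $q_i=p_i\bra{a_i}\rho_\phi\ket{a_i}\ge 0$, so $A_w$ is manifestly a convex combination of eigenvalues and the bound on the range follows without any detour through the quasi-probabilities or the cited trace inequality. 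What your version gives up is the intermediate statement the paper also wants on record — that each individual $g_i$ lies in $[0,1]$ — which the paper reuses later (e.g.\ in Corollary~\ref{corollary: main resul} and the discussion of anomalous quasi-probabilities); your weights $q_i/\sum_j q_j$ are exactly those $g_i$, so this is available from your computation too, but it is worth stating explicitly. Your handling of the symmetric case via $g(\rho_\phi,\rho_\psi|a_i)^\ast=g(\rho_\psi,\rho_\phi|a_i)$ is fine, and your remark about the strictly positive denominator correctly disposes of the only well-definedness issue.
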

\begin{proof}
First, let us show that a pair of (pure or mixed) states $\rho_{\psi}, \rho_{\phi}$ that are incoherent, that is, diagonal in the basis of $A$, cannot result in anomalous $A_w$. Let us assume $\rho_{\phi}, \rho_{\psi}$ are diagonal in $A$'s eigenbasis. As discussed in Ref.~\cite{wagner2023quantum}, for any set of states which are diagonal in a basis $A$, the invariants are the probability $p$ of getting the same outcome when measuring $A$ independently on all states. So the quasi-probabilities associated with $A_w$ can be rewritten as:
\begin{eqnarray}
&g(\rho_\phi,\rho_\psi\vert a_i)=\frac{\Delta_{\rho_{\phi} a_i \rho_{\psi}}} {\Delta_{\rho_{\phi} \rho_{\psi}}}=\nonumber\\
&\sum_{k=1}^d \frac{p(a_k, \rho_{\phi}) p(a_k,a_i ) p(a_k, \rho_{\psi})}{\Delta_{\rho_{\phi} \rho_{\psi}}}
=\frac{\Delta_{a_i \rho_{\phi}} \Delta_{a_i \rho_{\psi}}}{\Delta_{\rho_{\phi} \rho_{\psi}}}. \label{eq: quasiprob nonlinear invariant}
\end{eqnarray}

For diagonal, coherence-free states the quasi-probabilities $g(\rho_{\phi}, \rho_{\psi}|a_i)$ above define a genuine probability distribution: they are real non-negative values within $[0,1]$, and add up to 1. 
 In this case, $g(\rho_{\phi}, \rho_{\psi}|a_i)$ is the renormalized probability of obtaining equal outcomes $a_i$ after independent measurements of $A$ on $\ket{\psi}$, and $ \ket{\phi}$. As coherence-free states result in no anomalous quasi-probabilities, anomalous weak values are impossible for those states.

Let us now prove that anomalous weak values $A_w$ require that both $\rho_{\phi}$ and $\rho_{\psi}$ be coherent in $A$'s eigenbasis. Suppose, without loss of generality, that $\rho_{\phi}$ is diagonal in $A$'s basis, but $\rho_{\psi}$ is not. This implies that $\rho_\phi$ commutes with any $\vert a_i \rangle \langle a_i \vert$, and hence $\Delta_{\rho_\phi a_i \rho_\psi} = \text{Re}[\Delta_{\rho_\phi a_i \rho_\psi}] \leq 1$, the last inequality being a general feature of Bargmann invariants. Also, $\Delta_{\rho_{\phi} a_i \rho_{\psi}}=\Tr(\rho_{\phi} \ketbra{a_i}{a_i}\rho_{\psi}) = \Tr (\tau \rho_{\psi})$, where $\tau=\rho_{\phi}\ketbra{a_i}{a_i}$ is a  diagonal positive semi-definite matrix. 
It is known that for any two positive semidefinite matrices $X,Y$, the trace satisfies $\Tr(XY)\ge 0$ (see Theorem 1 of \cite{Yang95}). So, for the existence of negative or imaginary values of the invariants $\Delta_{\rho_{\phi} a_i \rho_{\psi}}$ we need both $\rho_{\psi}, \rho_{\phi}$ to be coherent. Positivity of all third order invariants implies that $g$ is non-anomalous, and therefore $A_w$ will also be non-anomalous. As we have seen, anomalous $A_w$ require at least one anomalous quasi-probability $g$. In case $g(\rho_\phi,\rho_\psi \vert a_i)>1$, we note that there must exist another $a_j \neq a_i$ such that $g(\rho_\phi,\rho_\psi\vert a_j)<0$ (due to the normalization of quasi-probabilities). As we have shown that negative values of $g$ are ruled out unless both $\rho_{\phi},\rho_{\psi}$ are coherent, this directly implies that values of $g$ larger than $1$ are also ruled out in this case.
\end{proof}

Some of the aspects outlined in the theorem above have appeared before. For instance, as mentioned in Ref.~\cite{dressel2010contextualvalues} negativity of $g$ arising from anomalous values of $A_w$ has been studied in the context of the so-called three-box problem in quantum foundations~\cite{aharonov2001threeboxproblem}, and in connections of anomalous weak values with Bell nonlocality~\cite{aharonov2015weak}. Ref.  \cite{wagner2023quantum} shows that negativity and imaginarity of the weak values $P_w^{(i)}$ of eigenprojectors of $A$ are witnesses of coherence, but not the general case of anomalous $A_w$, as in our Theorem \ref{theorem: main result}.

\begin{corollary}\label{corollary: main resul}
    Anomalous values of $g$ are sufficient for the existence of anomalous weak values for some observable, specifically, some eigenprojector of $A$'s. 
\end{corollary}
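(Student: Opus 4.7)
The plan is to observe that the quasi-probabilities $g(\rho_\phi,\rho_\psi|a_i)$ are themselves weak values of specific observables, namely the eigenprojectors of $A$. Denote $P_i := \ketbra{a_i}{a_i}$. Since $P_i$ has eigenvalue $1$ on $\ket{a_i}$ and $0$ on every other element of $A$'s eigenbasis, applying the Bargmann-invariant formula \eqref{eq:mixedb} to the observable $P_i$ in place of $A$ collapses the sum to a single term: $(P_i)_w = \Delta_3(\rho_\phi,a_i,\rho_\psi)/\Delta_2(\rho_\phi,\rho_\psi) = g(\rho_\phi,\rho_\psi|a_i)$.

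Next, I would note that the spectrum of $P_i$ is $\sigma(P_i)=\{0,1\}$, so a weak value of $P_i$ is, by the usual convention, anomalous exactly when it has nonzero imaginary part or lies in $\mathbb{R}\setminus[0,1]$. This is precisely the same notion of ``anomalous'' used for the quasi-probabilities $g$ in the paragraph following Eq.~\eqref{eq: quasi wv}. Therefore, whenever some $g(\rho_\phi,\rho_\psi|a_i)$ falls outside $[0,1]$ (either because it is negative, because it exceeds $1$, or because it has nonzero imaginary part), the equality $(P_i)_w = g(\rho_\phi,\rho_\psi|a_i)$ immediately yields an anomalous weak value for the observable $P_i$, which proves the corollary.

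There is no serious obstacle: the argument is a direct computation combined with matching definitions. The only point worth flagging is the conventional choice to call imaginary weak values ``anomalous'' — under this convention (consistent with the paper's usage and with Theorem~\ref{theorem: main result}, where imaginarity is treated on the same footing as negativity), the implication is immediate and tight. One might also add a brief remark that this corollary provides a partial converse to Theorem~\ref{theorem: main result}: coherence of both $\rho_\phi$ and $\rho_\psi$ in the eigenbasis of $A$ is necessary for anomaly of $A_w$, and whenever this coherence manifests at the level of a single $g_i$ lying outside $[0,1]$, one obtains a genuine anomalous weak value of the corresponding rank-one projector in $A$'s eigenbasis.
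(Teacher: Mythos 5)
Your proposal is correct and takes essentially the same route as the paper: the paper's one-line proof likewise identifies each anomalous $g_i$ with the anomalous weak value $P_w^{(i)}$ of the eigenprojector $P^{(i)}=\ketbra{a_i}{a_i}$, whose spectrum $\{0,1\}$ makes the two notions of anomaly coincide. You merely spell out the direct computation that the paper leaves implicit.
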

The proof is an immediate consequence of Theorem \ref{theorem: main result}:  any anomalous quasi-probability $g_i$ is an anomalous weak value $P_w^{(i)}$ for the associated eigenprojector $P^{(i)} \equiv |a_i\rangle\langle a_i|$ of $A$.

{
\section{Comparison with prior work}


Here we review some of the previously studied connections between quantum coherence and anomalous weak values, as a way to contextualize our results. In our view, these connections have been underappreciated up to now, as they are not mentioned in some comprehensive reviews on weak values \cite{dressel2014colloquium,tamir2013introduction}.

Ref.~\cite{mundarain2016quantumness} presented an analysis of how entanglement between the detection apparatus and the initial system is necessary for anomalous values that can be used in weak value amplification tasks. Their analysis is significantly less general than ours as it applies to an initial qubit system, discusses only real-valued weak values, relying on the usual weak measurement scheme, and using a much more intricate analysis of the Holevo quantity to conclude the necessity of quantum coherence. Our analysis, on the other hand, is broadly applicable, valid also to complex-valued weak values, and is not attached to any specific measurement scheme. 

Ref.~\cite{dressel2015weak} has a conceptual goal that is similar to ours: to show that anomaly arises from non-classical interference phenomena. The authors present a discussion of different methods for estimating weak values, and also briefly connect anomaly to negativity of quasiprobability distributions. No simple, general and formal argument such as Theorem~\ref{theorem: main result} is provided favoring an interpretation that weak values require quantum coherence.  Nevertheless, their results firmly establish the same conceptual result as ours, i.e., that any classical model reproducing the results of weak-measurements with anomalous weak-values must be capable of simulating properties of coherent quantum states. They make the connection between coherence and anomaly clear through negativity of quasiprobability, using a less general distribution than ours. For a comparison between our distribution $g$ and other constructions  previously proposed in the literature we refer to Sec.~\ref{sec: quasiprobability}. 

Finally, it is important to note that since anomalous weak values are proofs of quantum contextuality~\cite{pusey2014anomalous,kunjwal2019anomalous}, they are also proofs of the necessity of quantum coherence, as coherence is a necessary (yet not sufficient) condition for contextuality~\cite{wagner2022inequalities}. Still, the results from Refs.~\cite{pusey2014anomalous,kunjwal2019anomalous} heavily rely on the specific operational aspects of the weak measurement scheme, lacking in simplicity of the argument, specially if one is solely interested in coherence. We provide other comments on the connection with contextuality in Sec.~\ref{sec: contextuality}. 


}
\section{Coherence is not sufficient for weak value anomaly}

Theorem \ref{theorem: main result} above establishes that coherence is necessary for the appearance of anomalous weak values $A_w$. It is natural to ask whether it is also sufficient. In the following we show that in general coherence does not imply anomaly of $A_w$, and discuss particular conditions enabling results in this direction. We start with a simple example where anomalous quasi-probability values $g_i$ result in a non-anomalous weak value.

\textit{Example 1: anomalous quasi-probabilities yielding non-anomalous weak values.}-- Consider two rank-1 projectors $A=\ketbra{0}{0}, B=\ketbra{1}{1}$ in a 2-dimensional Hilbert space.  We can maximize negativity of $A_w$ with a configuration where $\ket{0}, \ket{\psi}, \ket{\phi}$ are separated by $120^o$ in a great circle of the Bloch sphere (see Fig. \ref{fig:awneg}). This results in a negative $A_w=-1/2$. The same choice of $\ket{\psi}, \ket{\phi}$ results in an anomalous weak value $B_w=3/2>1$. This example illustrates two points: 1)  anomalous weak values may arise from anomalous quasi-probabilities larger than 1, and not just from complex or negative values of the quasi-probabilities; and 2) even though both $A_w$ and $B_w$ are anomalous in this case, their sum gives a non-anomalous weak value for the identity operator $I_w=1$. 
As we can see, the fact that $A_w$ is an average weighted by quasi-probabilities means anomalous quasi-probabilities can average into a non-anomalous $A_w$.

\begin{figure}[h]
    
    \includegraphics[width=0.95\columnwidth]{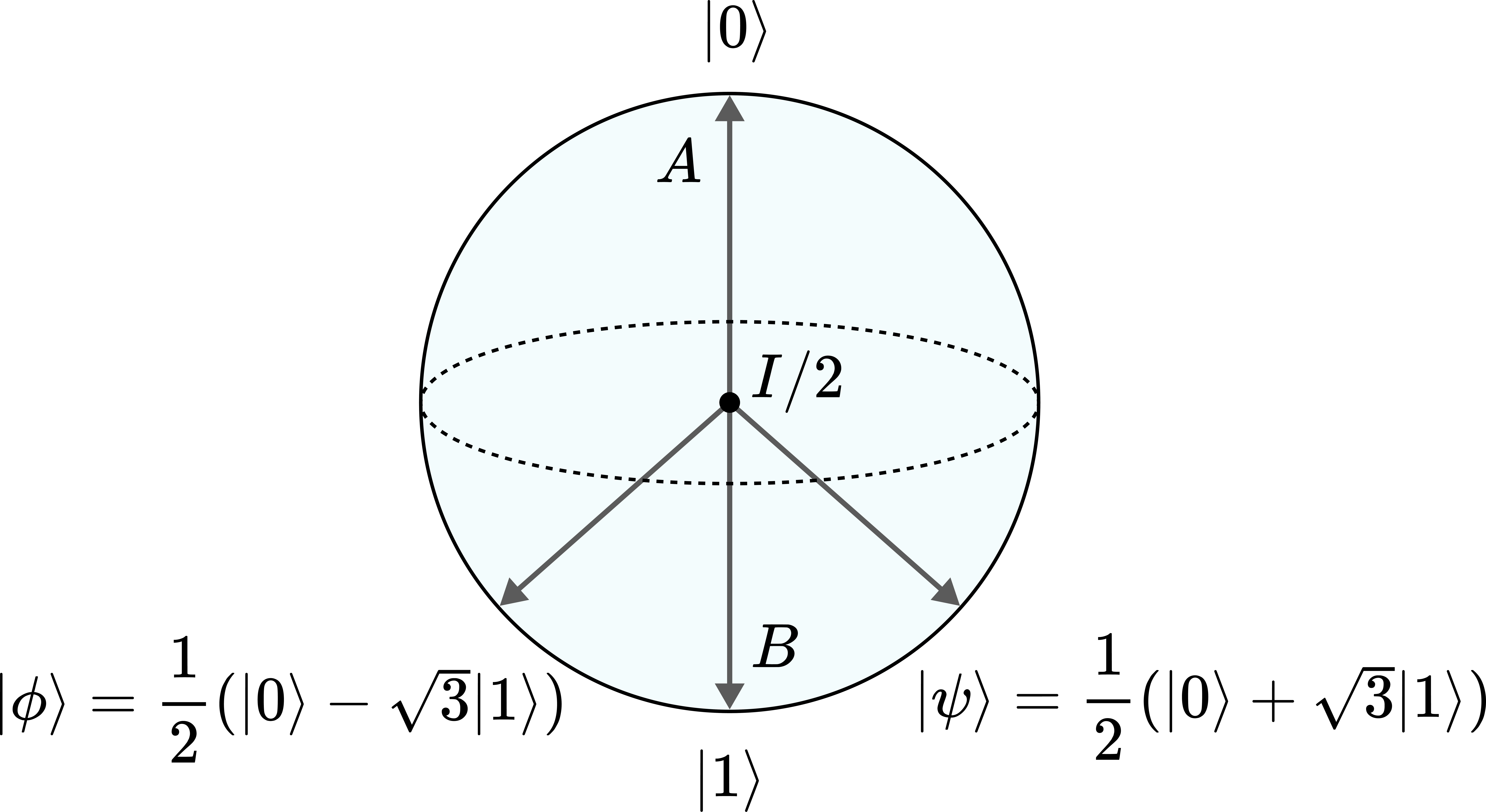}
    \centering
    \caption{\textbf{Example of anomalous weak values.} The weak value $A_w$ for the projector $A=\ketbra{0}{0}$, with $\ket{\phi}, \ket{\psi}$ chosen as in the figure, results in the anomalous $A_w=\frac{1}{1/4}\langle \phi|0\rangle \langle 0|\psi\rangle \langle \psi|\phi \rangle = \frac{1}{1/4} \cdot \left(\frac{1}{2}\right) \cdot\left(\frac{1}{2}\right)\cdot\left(-\frac{1}{2}\right)=-\frac{1}{2}<0.$ A similar calculation gives anomalous $B_w=3/2>1$. The weak value of the identity operator is non-anomalous: $I_w=A_w+B_w=1$.}
    
    \label{fig:awneg}
\end{figure}

As discussed in Refs.~\cite{chien2016characterization, oszmaniec2021measuring}, all unitary-invariant quantities of a set of states can be written in terms of Bargmann invariants. As we have seen in the proof of Theorem \ref{theorem: main result}, each quasi-probability  $g(\rho_\phi,\rho_\psi\vert a_i)$ is a function of such unitary-invariant quantities, which in the case of coherence-free states must be non-anomalous, i.e. real in the range $[0,1]$. A natural question is whether any pair of coherent states $\rho_{\phi}, \rho_{\psi}$ leads to anomalous values for $g_i$. This would signal that coherence is sufficient for the appearance of anomalous values of some observable. 

As it turns out, various sets of coherent states lead to non-anomalous distributions $g$, independently of whether the pre- and post-selected states are pure or mixed.  

\textit{Example 2: coherent states yielding non-anomalous quasi-probabilities.}-- Let $\rho_\phi,\rho_\psi \in \mathcal{D}(\mathcal{H})$ be an arbitrary pair of non-commuting density matrices: $[\rho_\phi,\rho_\psi] \neq 0$. Non-commutativity guarantees coherence with respect to any basis, hence in particular $A$'s eigenbasis. Consider the corresponding real-amplitude states, by mapping $\rho_\phi \mapsto \rho_\phi^{\mathbb{R}} = (\rho_\phi + \rho_\phi^T)/2$ and similarly for $\rho_\psi$. Random generation of state pairs will rapidly turn up examples with only non-anomalous  $g(\rho_\phi^{\mathbb{R}},\rho_\psi^{\mathbb{R}} | a_i) \in [0,1], \forall i$. Here is a qubit example:
\begin{equation*}
    \rho_\psi = \left(\begin{matrix}
    \frac{3}{4} & \sqrt{\frac{3}{32}}\\ \sqrt{\frac{3}{32}} & \frac{1}{4}\end{matrix}\right), \,\rho_\phi = \left(\begin{matrix}
        \frac{3}{4} & \frac{\sqrt{3}}{8} \\ \frac{\sqrt{3}}{8} & \frac{1}{4}
    \end{matrix}\right)
\end{equation*}
for which $g_0 = 0.829997$ and $g_1 = 0.170003$.

\section{Anomaly of $A_w$ as a witness of generalized contextuality}\label{sec: contextuality}

Of particular significance for the discussion of nonclassicality of weak values is their connection with 
generalized contextuality~\cite{spekkens2005contextuality}. Noncontextual models can reproduce some aspects of quantum superpositions~\cite{catani2021why,spekkens2007evidence}, despite being arguably classical from many viewpoints~\cite{bartlet2012reconstruction,baldi2021emergence,shahandeh2021contextuality,schmid2021characterization,spekkens2019ontological}. 
Refs.~\cite{pusey2014anomalous,kunjwal2019anomalous} show that noncontextual models cannot explain the data arising from weak measurements. As weak values can be measured in other ways \cite{wagner2023quantum}, we wonder if it is possible to obtain no-go results such as those in \cite{pusey2014anomalous,kunjwal2019anomalous}, \textit{without relying} on specifics of the operational weak measurement set-up. In light of our results, a simple way to do so is to use the event graph approach~\cite{wagner2022inequalities}, by studying the graph of Fig.~\ref{fig:awgraph} where one only imposes constraints over edges/two-state overlaps (as opposed to cycles in the frame graph representing higher-order invariants). Given a graph $G$, one defines polytopes $C_G$ whose facets can be translated into noncontextuality inequalities~\cite{wagner2022inequalities}. For the graph of Fig.~\ref{fig:awgraph} the only non-trivial inequalities will be $3$-cycle inequalities, 
\begin{equation}\label{eq: 3cycle inequality}
    h_3 := \Delta_{\phi\psi}+\Delta_{\phi a_i}-\Delta_{\psi a_i} \leq 1,
\end{equation}
and sign permutations.

Consider the simplest case, which corresponds to the graph associated with dimension $d=2$.
Again, we will use the quasi-probability $g_i$ to establish a connection of anomaly with a notion of non-classicality, in this case generalized contextuality.
For qubits with only real-valued amplitudes, whenever we have an anomalous value $g_i>1$ this implies, from the results of Ref.~\cite{wagner2023quantum}, that some $h_3 > 1$, an example of inequality violation.  
This violation can be used, together with the results from Refs.~\cite{wagner2022inequalities,wagner2022coherence} to construct prepare-and-measure fragments of quantum theory~\cite{selby2021accessible}, defined by a pair $(\mathcal{S},\mathcal{E})$ of sets of states and sets of measurement effects. If $g(\phi,\psi|a_i)>1$ holds, then letting $\mathcal{S} = \{\vert\phi\rangle,\vert \psi \rangle,\vert a_1\rangle,\vert a_2\rangle ,\vert\phi^\perp\rangle,\vert \psi^\perp \rangle\} $, where $\vert \phi^\perp \rangle $ is the antipodal state of $\vert \phi \rangle$ in the Bloch sphere, and effects $\mathcal{E} = \mathcal{S}$, then it can be shown that the fragment $(\mathcal{S},\mathcal{E})$ cannot have a noncontextual explanation. To robustly test this, one can use linear programming techniques that will indicate presence of contextuality directly, and moreover return robustness to depolarizing~\cite{selby2022open} and dephasing noise~\cite{rossi2022contextuality}. The case for $d>2$, or anomalous imaginary values of $g$ are not so direct but can also be analysed with the tools discussed here. 
This is an alternative argument that anomalous weak values imply generalized contextuality, different from the approach of Refs. \cite{pusey2014anomalous,kunjwal2019anomalous}.

\section{Discussion and further directions}

We have characterized anomalous weak values in terms of a complex-valued quasi-probability distribution. Although this distribution has appeared before, our treatment can be viewed either as a generalization of other constructions for real-valued quantities, or as a discrete version of a continuous-variable analogue. With this tool, we show that  coherence is necessary for anomalous weak values to occur; moreover, we show that specifically imaginarity or real values for $g$ outside the range $[0,1]$ are also sufficient for the appearance of anomalous weak values. We also presented examples showing that coherence does not necessarily yield anomalous weak values. 

Our results have applications in terms of simplifying tests of generalized contextuality associated with anomalous weak values, via direct quantum circuit measurements of weak values that do not appeal to specific operational aspects of weak measurement schemes.\\

\begin{acknowledgements}
We thank Eliahu Cohen, Amit Te'eni, Rui Soares Barbosa and Ismael L. Paiva for helpful discussions and feedback on an early version of this manuscript. We acknowledge financial support from FCT -- Fundação para a Ciência e a Tecnologia (Portugal) via PhD Grant SFRH/BD/151199/2021 (RW) and via Grant CEECINST/00062/2018 (EFG). This work was supported by the Horizon Europe project FoQaCiA, GA no. 101070558.
\end{acknowledgements}

\bibliography{bibliography}

\end{document}